\def\BibTeX{{\rm B\kern-.05em{\sc i\kern-.025em b}\kern-.08em
    T\kern-.1667em\lower.7ex\hbox{E}\kern-.125emX}}
\newcommand{\bd}{\boldsymbol}
\newtheorem*{remark}{Remark}
\theoremstyle{definition}
\newtheorem{prop}{Proposition}
\begin{document}	
	\title{Vulnerability of Distributed Inverter VAR Control in PV Distributed Energy System
}

\author{\IEEEauthorblockN{Bo Tu, Wen-Tai Li and Chau Yuen\\Singapore University of Technology and Design (SUTD)\\ 8 Somapah Road, Singapore 487372\\{\tt\small \{bo\_tu@, wentai\_li@, yuenchau@\}sutd.edu.sg}}
}

\maketitle

\begin{abstract}

	This work studies the potential vulnerability of distributed control schemes in smart grids. 
	To this end, we consider an optimal inverter VAR control problem within a PV integrated distribution network. 
	First, we formulate the centralized optimization problem considering the reactive power priority and further reformulate the problem into a distributed framework by an accelerated proximal projection method. 
	The inverter controller can curtail the PV output of each user by clamping the reactive power. 
	To illustrate the studied distributed control scheme that may be vulnerable due to the two-hop information communication pattern, we present a heuristic attack injecting false data during the information exchange. 
	Then we analyze the attack impact on the update procedure of critical parameters. 
	A case study with an eight-node test feeder demonstrates that adversaries can violate the constraints of distributed control scheme without being detected through simple attacks such as the proposed attack.

\end{abstract}

\begin{IEEEkeywords}
Distributed inverter control, PV, reactive power priority, vulnerability Analysis, Cyber-physical attack
\end{IEEEkeywords}

\section{Introduction}

As the penetration level of photovoltaic (PV) increases rapidly in the distribution grid, the intention of a normal energy consumer to sell surplus solar power back to the grid also boosts. 
Hence, numerous smart inverters appear surrounding the neighborhood \cite{lin2020research}.
Regarding the IEEE 1574-2018 Standard \cite{1547-2018}, smart inverters have to convert the photovoltaic output power as AC power and provide or absorb a specific amount of reactive power to level the fluctuation of nodal voltage.

Contrary to scenarios within a transmission network, the conventional centralized control method may not be applicable for the distribution network with large-scale inverters integrated because of the potential heavy burden on information exchange and computation.
Therefore, in recent years, inverter-based decentralized control schemes have been extensively studied to deal with nodal voltage regulation issues.
\cite{farivar2012optimal} initiates and explores the advanced voltage-ampere reactive (VAR) control capability of inverter interfaces to the grid.
In light of the above work, the decentralized optimization method has attracted great interest in the area of inverter voltage control within the distribution network 
\cite{zhu2015fast, qu2019optimal,lin2016decentralized}.
All the studies above did not investigate the vulnerability of their decentralized controllers, namely, the security of the parameter iteration. 
In the real world application, a so-called reactive power priority principle \cite{1547-2018}, aiming to guarantee the stability of the nodal voltage, may hamper the efficiency of the renewable energy supply and motivate PV owners to conduct misbehavior in the system.

The main contributions of this work are two parts.
First, we characterize an accelerated proximal gradient method for a distributed control scheme considering the limitation of the rated apparent power.
Second, we investigate the vulnerability of the local information update mechanism of the inverter control scheme.
Specifically, we demonstrate that even a simple adversarial behavior, namely, a heuristic attack, can deviate the distributed controller to converge to a wrong value; hence, the attacker can feed more PV power to the grid and bypass the reactive power priority.

The notation of this paper is addressed as follows. 
The symbol $:=$ denotes the equality by definition.
We denote the set of real numbers as $\mathbb{R}$ and non-negative real numbers as $\mathbb{R}_+$.
The bold front indicates the vector.
The symbol $\bd{1}_T$ represents the $T$ dimensional all-ones vector, and the symbol $\bd{A}^{\mathsf T}$ represents the transpose of a vector $\bd{A}$.
The symbol ${\rm{I}}_T$ represents the $T$ dimensional identity matrix.

\begin{figure}[b!]
	\centering
	\includegraphics[scale=0.35]{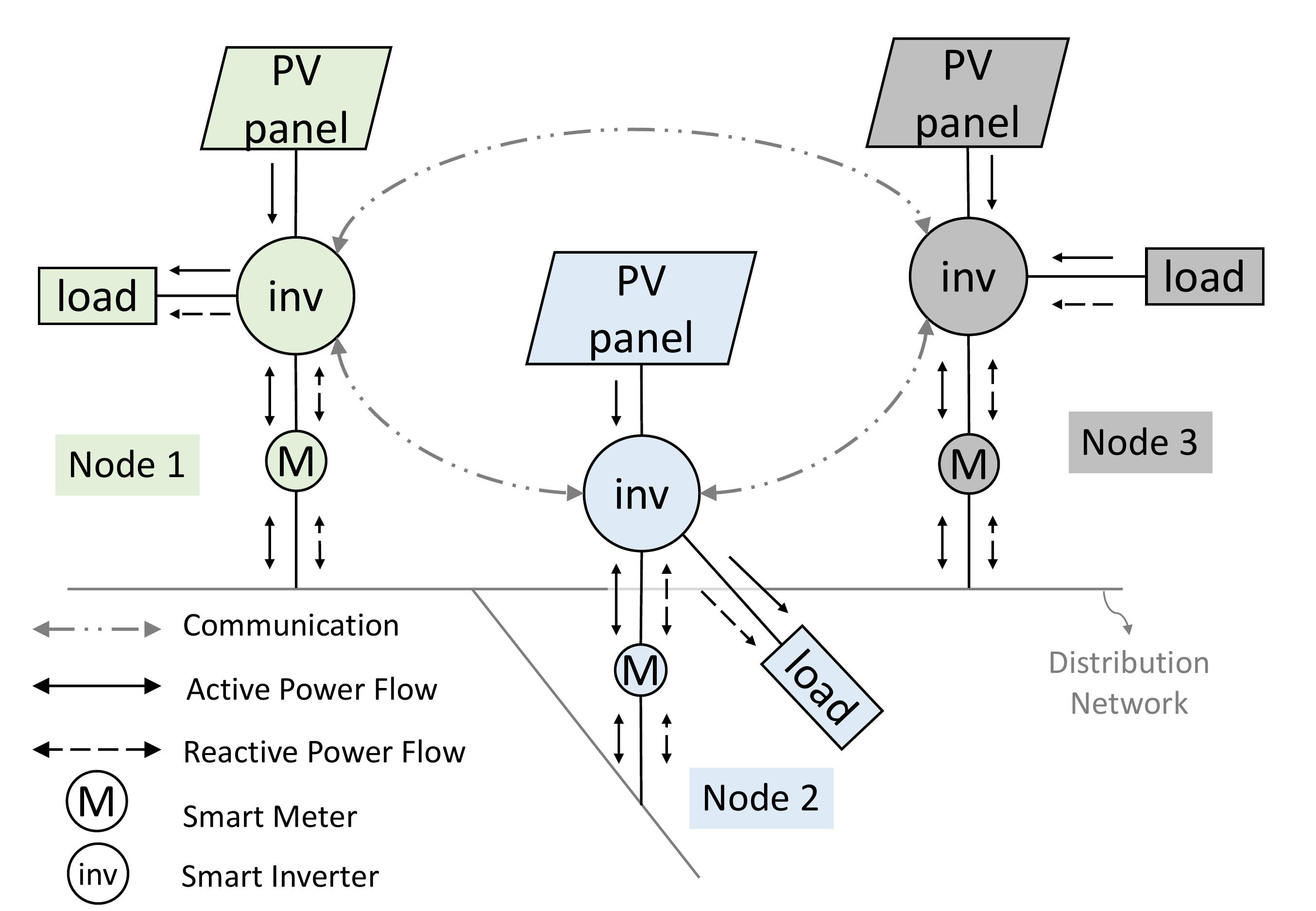}
	\caption{Distribution network with smart inverter integrated}
	\label{fig:system} 
\end{figure}

\section{Problem Formulation}

\subsection{System Preliminaries}

This paper considers a single-phase radial distribution feeder system that comprises multiple PV integrated residents.
Such a system can be denoted as a graph $G(\tilde{\mathcal{N}},\mathcal{E})$ consisting of a set of $N+1$ nodes $\tilde{\mathcal{N}} := \{0,1,\ldots,N\}$,
and a set of links $\mathcal{E}\subseteq\{\{i,j\}|i,j\in\tilde{\mathcal{N}},~i\neq j\}$.
Removing the substation node with index $0$, we have the interest set $\mathcal{N} :=\tilde{\mathcal{N}}/\{0\}$.
For node $i\in\mathcal{N}$, $v_i$ denotes the voltage magnitude, $p_i$ denotes the node active power injection, and $q_i$ denotes the reactive power injection.
For each link $(i,j)\in\mathcal{E}$, we denote by $x_{ij}$ the reactance value and $r_{ij}$ the resistance value.
The term $P_{ij}$ and $Q_{ij}$ denote the real and reactive power from node $i$ to $j$, respectively.
As shown in Fig. \ref{fig:system}, each node equips with a PV panel, smart inverter, and a smart meter.
The controller-based smart inverter can communicate with each other.
The active and reactive power can be injected into or absorbed from the network.
Due to the smart meter, each node will be charged or paid for using or providing active power.


A smart inverter needs to provide reactive power, positive or negative, as required in IEEE 1547-2018 \cite{1547-2018}.
Practically, the active and reactive power have to satisfy an inequality constraint as follows:
\begin{align}
	\label{eq:inverter}|q_i| \leq \sqrt{\bar{s}_i^2 - \tilde{p}_i^2},~ i = 1,\ldots,N,
\end{align}
where $\bar{s}_i$ is the upper bound of the apparent power capacity, a given value of a specific smart inverter of node $i$.
The reactive power ${q}_i$ is the controllable variable by the smart inverter, either positive (injection) or negative (absorption).
Furthermore, $p_i$ and $\tilde{p}_i$ are obtainable due to smart inverters and smart meters in our configuration.

\subsection{Linearized Branch Flow Model for Radial Networks}

We adopt a simplified Distflow model \cite{baran1989optimal}, which is extensively used in voltage control problems in distribution systems \cite{farivar2012optimal}.
For each link $(i,j)\in\mathcal{E}$, by ignoring the power loss term, we can derive the linearized Distflow model as \cite{zhu2015fast} 
\begin{subequations}
\begin{align}
	P_{ij} &= \sum_{k:(j,k)\in\mathcal{E}}P_{jk} - p_j,\\
	Q_{ij} &= \sum_{k:(j,k)\in\mathcal{E}}Q_{jk} - q_j,\\
	v_i - v_j &= 2(r_{ij}P_{ij} + x_{ij}Q_{ij}).
\end{align}
\end{subequations}

By denoting the voltage vector as $\bd{v} := [v_1,\ldots,v_N]^\mathsf{T}\in\mathbb{R}^N_+$, active power injection vector as $\bd{p} := [p_1,\ldots,p_N]^\mathsf{T}\in\mathbb{R}^N$, and reactive power injection vector as $\bd{q} := [q_1,\ldots,q_N]^\mathsf{T}\in\mathbb{R}^N$, we can derive the following equation: 
\begin{align}
	\label{eq:grid}
	\bd{v} = \bd{Rp} + \bd{Xq} + v_0\mathbf{1}_N,
\end{align}
where the resistance information matrix $\bd{R}\in\mathbb{R}^{N\times N}$ can be denoted as $\bd{R} := 2(-\bd{A}^{\sf -1}){\rm diag}(\bd{r})(-\bd{A}^{\sf -1})^{\sf T}$ \cite{kekatos2015voltage}\cite{kekatos2015fast},
where matrix $\bd{A}$ denotes an incidence matrix without the first column of node $0$, and $\bd{A}$ is an $N$-dimensional square matrix for a radial structure network;
besides, $\rm{diag}(\bd{r})$ denotes a diagonal matrix with its diagonal entry $r_{nn}$ equivalent to the corresponding $r_n$, for $n = 1,\ldots, N$;
likewise, the reactance information matrix $\bd{X}\in\mathbb{R}^{N\times N}$ can be denoted as $\bd{X} := 2(-\bd{A}^{\sf -1}){\rm diag}(\bd{x})(-\bd{A}^{\sf -1})^{\sf T}$.

\subsection{Centralized Inverter Var Control}

Accordingly, we formulate the inverter VAR control problem as follows:
\begin{equation}
\begin{aligned}
	\label{eq:objV}\min & \quad\frac{1}{2}||\bd{v} -\bd{\mu}||_2^2\\
	{\text{s.t.}}& \quad (\ref{eq:inverter})~\rm{and}~(\ref{eq:grid})~{\text{hold}}
\end{aligned}
\end{equation}
where $\bd{\mu}=[\mu_1,\ldots,\mu_N]^{\mathsf T}$, with $\mu_i =220V,~i\in\mathcal{N}$ denoting the desired voltage value.

Substituting (\ref{eq:grid}) into the objective function of (\ref{eq:objV}), and rewriting the inequality constraints in a compact manner, we can recast the problem (\ref{eq:objV}) as follows:
\begin{align}
	\label{eq:objrecast}
	\min_{\bd{q}} \quad &\frac{1}{2} ||\bd{Rp} + \bd{Xq}||_2^2\\\nonumber
	{\rm s.t.} \quad &\bd{C}\bd{q} - \bd{D}\bd{Q} \leq \bd{0}
\end{align}
where $\bd{C} := [
{\bd{\rm I}}_N,~-{\bd{\rm I}}_N
]^{\mathsf T}$, $\bd{D} := [
{\bd{\rm I}}_N,~{\bd{\rm I}}_N
]^{\mathsf T}$, $\bd{Q} :=[Q_1,\ldots,Q_N]^{\mathsf T}$, and $Q_j := \sqrt{\bar{s}_j^2 - \tilde{p} _j^2}$, $j=1,\ldots,N$.

Such a problem can be solved easily via existing solvers in a centralized manner because of the convexity of the decision variable $\bd{q}$. 
However, as for a distribution feeder network, the centralization-like operation also indicates a higher requirement on the communication bandwidth and computation capability, which may increase retailers' costs.
Therefore, in the sequel, the centralized control issue will be reformulated into a decentralized way.


The Lagrange duality approach is extensively employed to solve the constrained optimization problem  (\ref{eq:objrecast}) \cite{boyd2004convex}.
Since the derived Lagrange dual problem is convex \cite{boyd2004convex}, it has a convenient form to implement the proximal algorithms, such as 
the proximal gradient method and the Alternating direction method of multipliers  \cite{parikh2014proximal}.
Both of the above methods solve the primal problem in a decentralized way.
This work focus on the accelerated proximal gradient method, which is also well-known as FISTA \cite{beck2009fast}.


Accordingly, the Lagrangian $\mathcal{L}:\mathbb{R}^N\times\mathbb{R}^{N}\times\mathbb{R}^{N} \to \mathbb{R}$ of the above issue is given as
\begin{align}
	\label{eq:lagrange}
	\mathcal{L}(\bd{q},\bar{\bd{\lambda}},\underline{\bd{\lambda}}) = &\frac{1}{2}(\bd{Rp}+\bd{Xq})^{\mathsf T}(\bd{Rp}+\bd{Xq}) +\bar{\bd{\lambda}}^{\mathsf T}(\bd{q}-\bd{Q})\nonumber\\
	 &+\underline{\bd{\lambda}}^{\mathsf T}(-\bd{q}-\bd{Q}),
\end{align}
where $\bar{\bd{\lambda}}$, $\underline{\bd{\lambda}}\in \mathbb{R}^N$ are the Lagrange multipliers.
For simplicity, we write $\bd{\lambda}:=[\bar{\bd{\lambda}}^{\mathsf T},\underline{\bd{\lambda}}^{\mathsf T}]^{\mathsf T}\in\mathbb{R}^{2N}$ in a compact way.
The dual function $\mathcal{G}:\mathbb{R}^{2N}\to\mathbb{R}$ is given as $\mathcal{G}(\bd{\lambda}) := \inf_{\bd{q}\in\mathcal Q}\mathcal{L}(\bd{q},\bd{\lambda})$, where $\mathcal{Q}$ is the feasible region of the decision variable $\bd{q}$.
Since $\mathcal{L}$ is a convex quadratic function on $\bd{q}$, we can derive the first-order optimal condition $\nabla_{\bd{q}} \mathcal{L}(\bd{q},\bd{\lambda}) = \bd{0}$,
therefore, we have $\bd{q} = -\bd{X}^{-2}\bd{C}^{\mathsf T}\bd{\lambda} - \bd{X}^{\mathsf{-1}}\bd{R}\bd{p}$.
By assuming the network is homogeneous, it is obvious that $\bd{X}^{\mathsf{-1}}\bd{R}$ can be derived as $\xi\mathbf{I}_{N}$, where $\xi\in\mathbb{R}$ is a scalar.
Therefore, by denoting $\bd{B}:=\bd{X}^{-1}$, we have 
	$\bd{q} = -\bd{B}^2\bd{C}^{\mathsf T}\bd{\lambda}\ -\xi\bd{p}$,
then let us substitute $\bd{q}$ into (\ref{eq:lagrange}) to derive the objective function of the Lagrange dual problem as
\begin{align}
\label{eq:Ldual}
	\tilde{\mathcal{G}}(\bd{\lambda}) := -\frac{1}{2} \bd{\lambda}^{\mathsf T}\bd{CB}^2\bd{C}^{\mathsf T}\bd{\lambda} - \xi\bd{\lambda}^{\mathsf T}\bd{Cp} - \bd{\lambda}^{\mathsf T}\bd{C}\bd{q},
\end{align}
hence, by defining $\mathcal{G}:= -\tilde{\mathcal{G}}$, the Lagrange dual problem can be denoted as
\begin{align}
\label{eq:dualproblem}
	\min_{\bd{\lambda}\geq \bd{0}} ~ \mathcal{G}(\bd{\lambda}) ,
\end{align}
where the Lagrange multiplier $\bd{\lambda} \geq \bd{0}$ is equivalent to a case $\bd{\lambda}$ lies within a non-negative orthant within a Cartesian coordinate plane.
Note that the dual problem is convex \cite{boyd2004convex}.

\subsection{Accelerated proximal gradient method}

So far, we have reformulated the primal problem to the associated dual problem.
Inspired by \cite{8452988}, we employ an accelerated proximal gradient method \cite{parikh2014proximal} to solve the problem iteratively.
Accordingly, problem (\ref{eq:dualproblem}) can be recast as an unconstrained optimization problem
\begin{equation}
	\label{eq:lagdual}
	\min \quad \mathcal{G}(\bd{\lambda}) + \mathcal{I}_\mathcal{C}(\bd{\lambda}),
\end{equation}
where $\bd{\lambda}\in\mathbb{R}^{2N}$ is the decision variable, the function $\mathcal{I}_\mathcal{C}:\mathbb{R}^{2N}\to\mathbb{R}\cup \{+\infty\}$ is an indicator function of a convex set $\mathcal{C}\subseteq\mathbb{R}_+$ given as follows
\begin{equation}
	\label{eq:indicate}
	\mathcal{I}_\mathcal{C} := 
	\begin{cases}
	0&{\rm if}~{\lambda_{i}}\in \mathcal{C}\\
	+\infty & {\rm if}~{\lambda_{i}}\notin \mathcal{C}
	\end{cases},~i=1,\ldots,2N.
\end{equation}
Let us leverage the accelerated proximal gradient method \cite{beck2009fast} to solve the reformulated Lagrange dual problem (\ref{eq:lagdual}).
Hence, we have the parameter update rules as follows
\begin{align}
	\label{eq:update3}\bd{\lambda}(k) &:= {\textbf{prox}}_{\alpha\mathcal{I}_\mathcal{C}}\Big(\bd{\theta}(k) - \alpha\nabla \mathcal{G}\big(\bd{\theta}(k)\big)\Big),\\
	\label{eq:update2}\omega(k+1) &:= \frac{1+ \sqrt{1+4\omega(k)^2}}{2},\\
	\label{eq:update1}\bd{\theta}(k+1) &:= \bd{\lambda}(k) + \frac{\omega(k)-1}{\omega(k+1)}\big(\bd{\lambda}(k) - \bd{\lambda}(k-1)\big),
\end{align}
where {$\bd{\theta}:=[\bar{\bd{\theta}}^{\mathsf T},\underline{\bd{\theta}}^{\mathsf T}]^{\mathsf T}\in\mathbb{R}^{2N}$}, $\alpha>0$ is a fixed step size.

\section{Security Issue Analysis}

The decentralized control scheme offers a flexible and efficient way to adjust the reactive power to keep the stability of the nodal voltage.
However, such a scheme may be vulnerable to some adversarial behavior.
In the sequel, we analyze the vulnerability of the information local-update mechanism of the very controller.
Then we introduce a reactive power priority principle that may motivate a PV owner to conduct some misbehaviors.
At last, demonstrate that even a heuristic attack can mislead the controller to converge to the wrong value.

\subsection{Parameter update rule analysis}


As shown in the previous section, the update rule (\ref{eq:update3}) is based on a proximal operator, which is not easy to analyze.
In this subsection, an equivalent form is derived, whereby the underlying communication mechanism can be unveiled.

	\begin{prop}
		\label{prop1}
		The parameter update rule (\ref{eq:update3}) is equivalent to 
		\begin{align}
			\lambda_i(k) = 
			\begin{cases}
				h_i(k)&,~{\rm if }~h_i(k)\geq 0\\
				0&,~{\rm if }~h_i(k)< 0
			\end{cases}, ~i = 1,\ldots, 2N
		\end{align} 
	where $\bd{h}(k) := \bd{\theta}(k) - \alpha\nabla \mathcal{G}(\bd{\theta}(k))\in\mathbb{R}^{2N}$. 
	\end{prop}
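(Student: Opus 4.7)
The plan is to unpack the definition of the proximal operator and exploit the structure of the indicator function $\mathcal{I}_\mathcal{C}$. Recall that by definition
\begin{equation*}
\textbf{prox}_{\alpha\mathcal{I}_\mathcal{C}}(\bd{y}) := \arg\min_{\bd{x}\in\mathbb{R}^{2N}}\left(\mathcal{I}_\mathcal{C}(\bd{x}) + \frac{1}{2\alpha}\|\bd{x}-\bd{y}\|_2^2\right).
\end{equation*}
Since $\mathcal{I}_\mathcal{C}$ takes the value $+\infty$ whenever any component $x_i$ lies outside $\mathcal{C}\subseteq\mathbb{R}_+$, any finite minimizer must satisfy $x_i\geq 0$ for all $i$; on the feasible region the indicator term vanishes, so the proximal map collapses to the Euclidean projection onto the non-negative orthant $\mathbb{R}_+^{2N}$. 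The first step of the proof is simply to state this reduction.

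Next, I would observe that both the non-negativity constraint and the squared-norm objective are separable in the coordinates, so the projection decouples into $2N$ scalar problems of the form $\min_{x_i\geq 0}(x_i-y_i)^2$. A one-line argument, splitting on the sign of $y_i$, shows that the scalar minimizer is $\max(0,y_i)$: if $y_i\geq 0$, the unconstrained minimum $x_i=y_i$ is already feasible; if $y_i<0$, the objective is strictly increasing on $[0,\infty)$ and the minimum is attained at $x_i=0$.

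Finally, I would substitute the specific argument $\bd{y}=\bd{\theta}(k)-\alpha\nabla\mathcal{G}(\bd{\theta}(k))=\bd{h}(k)$ into the componentwise formula to obtain $\lambda_i(k)=\max(0,h_i(k))$, which is precisely the piecewise expression in the proposition.

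There is no real obstacle here: the argument is standard once one recognizes that the proximal operator of the indicator of a convex set is the metric projection onto that set, and that projection onto the non-negative orthant is separable. The only point worth stating carefully is the separability, since the indicator in (\ref{eq:indicate}) is written coordinatewise and one must justify that this coordinatewise definition is equivalent to the indicator of the product set $\mathbb{R}_+^{2N}$ before the scalar reduction can be invoked.
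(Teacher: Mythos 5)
Your proposal is correct and follows essentially the same route as the paper: recognize that the proximal operator of the indicator $\mathcal{I}_\mathcal{C}$ reduces to the Euclidean projection onto the non-negative orthant, then read off the componentwise formula $\lambda_i(k)=\max(0,h_i(k))$. In fact you supply slightly more detail than the paper, whose proof stops at identifying the projection onto the positive axis; your explicit remark on separability and the scalar case split on the sign of $h_i(k)$ is exactly the step the paper leaves implicit.
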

	\begin{proof}
		Since the proximal operator is given as 
		\begin{equation}
			\label{eq:lambda}
			\bd{\lambda}(k) = \arg \min_{\bd{\lambda}\in \mathcal{C}} \Bigg\{\mathcal{I}_C(\bd{\lambda}) + \frac{1}{2\alpha}||\bd{\lambda} - \bd{h}(k)||_2^2\Bigg\},
		\end{equation}
		by recalling (\ref{eq:indicate}), $\bd{\lambda}$ can be derived as follows
		\begin{equation}
			\label{eq:lambda_projection}
			\bd{\lambda}(k) = \arg \min_{\bd{\lambda}\in \mathcal{C}} \frac{1}{2\alpha}||\bd{\lambda} - \bd{h}(k)||_2^2.
		\end{equation}
		Since the set $\mathcal{C}\subseteq\mathbb{R}^{2N}_+$, in which case (\ref{eq:lambda_projection}) indicates the projection onto the positive axis of $\bd{\lambda}$.
	\end{proof}

Therefore, it is easy to understand the decentralized iterative operator (\ref{eq:update3}) in a distributed manner, i.e., the parameter can be decided by sharing the information with each other rather than conducted by a centralized operator.

By defining $\tilde{\bd{B}} := {\bd{B}}^2$, let us recall (\ref{eq:Ldual}) and (\ref{eq:update3}) to derive the term $\nabla \mathcal{G}$ as
\begin{align}
	\nabla \mathcal{G}(\bd{\theta}) &= 
	\begin{pmatrix}
		\tilde{\bd{B}}&-\tilde{\bd{B}}\\
		-\tilde{\bd{B}}&\tilde{\bd{B}}
	\end{pmatrix} 
	\begin{pmatrix}
		\bar{\bd{\theta}}\\
		\underline{\bd{\theta}}
	\end{pmatrix} + \xi\bd{Cp} + \bd{C}\bd{q}\\
	&= \begin{pmatrix}
		\tilde{\bd{B}}(\bar{\bd{\theta}}-\underline{\bd{\theta}})\nonumber\\
		-\tilde{\bd{B}}(\bar{\bd{\theta}}-\underline{\bd{\theta}})
	\end{pmatrix} 
	+ \xi\bd{Cp} + \bd{C}\bd{q}.
\end{align}
We can see that the term $\tilde{\bd{B}}$ is the coefficient of the term $\bd{\theta}$ to decide the update of $\bd{\lambda}$.
We further rewrite the non-zero part of (\ref{eq:update3}) in a distributed manner: 
\begin{align}
	\label{eq:updatei1}\bar{\lambda}_i(k) = & \bar{\theta}_i(k) -\alpha(\tilde{B}_{ri}(\begin{pmatrix}\bar{\theta}_1(k)\\\vdots\\\bar{\theta}_N(k)\end{pmatrix}
	-\begin{pmatrix}\underline{\theta}_1(k)\nonumber\\\vdots\\\underline{\theta}_N(k)\end{pmatrix}) \\&+\xi p_i(k)+\bar{q}_i(k)),
\end{align}
\begin{align}
	\label{eq:updatei2}\underline{\lambda}_i(k) = & \underline{\theta}_i(k) +\alpha(\tilde{B}_{ri}(\begin{pmatrix}\bar{\theta}_1(k)\\\vdots\\\bar{\theta}_N(k)\end{pmatrix}
	-\begin{pmatrix}\underline{\theta}_1(k)\nonumber\\\vdots\\\underline{\theta}_N(k)\end{pmatrix}) \\&+ \xi p_i(k)+\underline{q}_i(k)),
\end{align}
where $\tilde{\bd{B}}_{ri}$ denotes the $i$th row of the matrix $\tilde{\bd{B}}$.
Likewise, $\omega_i$ and $\bd{\theta}_i$ can also be written in a distributed manner.

\subsection{Local information update}

One can observe that the terms $\bar{\lambda}_i$ and $\underline{\lambda}_i$ presented in (\ref{eq:update3}) also depend on $\bar{\theta}_j$ and $\underline{\theta}_j$ of the other node $j\in\mathcal{N}\setminus\{i\}$, and parameters $\theta_i$ and $\omega_i$ update only depend on the historical data of node $i$ itself.
Given such a paradigm, each node has to broadcast its own data and receive the other nodes' data to complete the $k$th iteration.
The real-time communication burden would be an issue.
However, as for the radial distribution network, the controller spontaneously works in a distributed manner.
It also indicates that the controller may update without massive information exchange but locally sharing.

Let us investigate matrices $\bd{X}$ and $\bd{R}$ for further understanding.
Recall that $\bd{X}\in\mathbb{R}^{N\times N}$ is the reactance square matrix of the particular power system.
The term $\tilde{\bd{B}} = ({X^{-1}})^2$ can be regarded as a two-hop information matrix \cite{lynch1996distributed}, partially indicating the two-layer nearest neighborhood of each node.
Note that such a characteristic can be easily derived via the incidence and adjacent matrix of the interest network.
Therefore, for each row of the matrix $\tilde{\bd{B}}$, with respect to (\ref{eq:updatei1}) and (\ref{eq:updatei2}), we can obtain a reduced correlations scale, i.e., each node within the network, only exchange message with its two-layer nearest neighbors rather than using the full bandwidth to communicate with all the other nodes.
Such a procedure is also well known as two-hop communication \cite{lynch1996distributed}.

\begin{prop}
	Given the radial network $G$, the matrix $\tilde{\bd{B}}$ indicates the two-hop neighbor of each node.
\end{prop}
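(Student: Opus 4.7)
The plan is to make the claim entirely algebraic by unpacking $\tilde{\bd{B}}=(\bd{X}^{-1})^2$ in terms of the reduced incidence matrix $\bd{A}$ of $G$. From the defining expression $\bd{X}=2(-\bd{A}^{-1}){\rm diag}(\bd{x})(-\bd{A}^{-\mathsf T})$, inverting directly yields $\bd{X}^{-1}=\tfrac{1}{2}\bd{A}^{\mathsf T}{\rm diag}(\bd{x})^{-1}\bd{A}$, a weighted Laplacian-type object attached to $G$. I would first establish that $\bd{X}^{-1}$ shares the one-hop adjacency pattern of $G$, and then promote this to the two-hop pattern of $\tilde{\bd{B}}$ by a single matrix squaring.

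For the one-hop step, I would write $(\bd{X}^{-1})_{ij}=\tfrac{1}{2}\sum_{e}\bd{A}_{ei}\bd{A}_{ej}/x_e$ and observe that a summand is nonzero only when the edge $e$ has both $i$ and $j$ as endpoints. This gives diagonal entries strictly positive (a sum over edges incident to $i$) and, for $i\neq j$, a nonzero entry if and only if $\{i,j\}\in\mathcal{E}$. Because $G$ is a tree, each adjacent pair is joined by a unique edge, so every nonzero off-diagonal entry reduces to a single term and is sign-definite; no cancellation can occur.

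For the two-hop step, I would expand $(\tilde{\bd{B}})_{ij}=\sum_{k}(\bd{X}^{-1})_{ik}(\bd{X}^{-1})_{kj}$ and apply the one-hop characterization: the summand vanishes unless $k$ lies in the closed one-hop neighborhood of both $i$ and $j$. Such a $k$ exists precisely when $i$ and $j$ are within graph distance two, which delivers the claimed two-hop support of $\tilde{\bd{B}}$. The main obstacle is non-cancellation of the surviving terms, since on a general graph several walks of length two between the same endpoints could contribute with competing signs. Here the radial (tree) structure rescues the argument: any two adjacent nodes share no common neighbor, so $(\tilde{\bd{B}})_{ij}=(\bd{X}^{-1})_{ij}\bigl[(\bd{X}^{-1})_{ii}+(\bd{X}^{-1})_{jj}\bigr]$ is sign-definite and nonzero; any two nodes at distance exactly two share the unique midpoint of the unique path, contributing a single strictly positive term; and nodes at distance at least three admit no witnessing $k$ at all. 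Chaining these three cases completes the identification of the support of $\tilde{\bd{B}}$ with the two-hop neighborhood of $G$.
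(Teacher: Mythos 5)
Your proof is correct, and it takes a genuinely different and substantially more rigorous route than the paper's. The paper does not argue algebraically at the level of entries: it fixes a single concrete instance (a four-layer binary tree rooted at the substation's child), forms the reduced incidence matrix by deleting the substation column, and simply asserts that inspection of $\bigl((-\bd{A}^{-1})(-\bd{A}^{-1})^{\mathsf T}\bigr)^{-2}$ shows nonzeros exactly on the two-hop region --- essentially a verification by example declared ``without loss of generality.'' You instead prove the general statement: the identity $\bd{X}^{-1}=\tfrac{1}{2}\bd{A}^{\mathsf T}\,{\rm diag}(\bd{x})^{-1}\bd{A}$ identifies $\bd{X}^{-1}$ as a grounded weighted Laplacian whose support is the one-hop adjacency pattern, and the squaring step is then controlled by a non-cancellation argument that genuinely uses radiality (no triangles, hence adjacent nodes have no common neighbor and the distance-two witness is unique), yielding sign-definite entries in each of the three distance cases. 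This buys a proof valid for arbitrary radial topologies and heterogeneous line reactances, and it makes explicit the one point the paper glosses over, namely why no length-two walks cancel. The only caveat worth recording is that your ``such a $k$ exists precisely when $i$ and $j$ are within graph distance two'' implicitly measures distance in the grounded graph (node $0$ deleted): if the substation had two or more children, a pair of them would be at distance two in $G$ yet have no admissible witness $k\in\{1,\ldots,N\}$, so the corresponding entry of $\tilde{\bd{B}}$ would vanish. This edge case is absent from the paper's single-feeder setting and affects its claim equally, but a one-line remark restricting to paths through $\mathcal{N}$ would make your statement airtight.
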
 
\begin{proof}
	Without loss of generality, we consider a four-layer binary tree rooted by node $m_1$.
	Practically, a node $m_0$ representing the substation is the father node of $m_1$.
	By eliminating the column of $m_0$ of the associated incidence matrix, a sparsity square matrix can be derived.
	Then, it is easy to observe that for the symmetric matrix $({(-A^{\mathsf -1})(-A^{\mathsf -1})^{{\mathsf T}}})^{\mathsf -2}$, the two-hop region of each node is non-zero, and nodes outside the region are all zero.
\end{proof}

Therefore, to fulfill the control scheme in a distributed manner, the controller only shares information with his two-hop neighbors. 
The advantages are the limited communication bandwidth consumption and less central computing burden.
However, such a scheme diluting the center's power, such as the bad data detection, may leave potential operational space for some adversarial behaviors.

\subsection{Reactive power priority}

Recalling the constraint (\ref{eq:inverter}), the active, reactive, and apparent power relationship can be illustrated inside a Cartesian coordinate system with a half-circle of radius $\bar{s}_i$ centered at the origin, as shown in Fig. \ref{fig:curtail}.
\begin{figure}[htb!]
	\centering
	\includegraphics[scale=0.45]{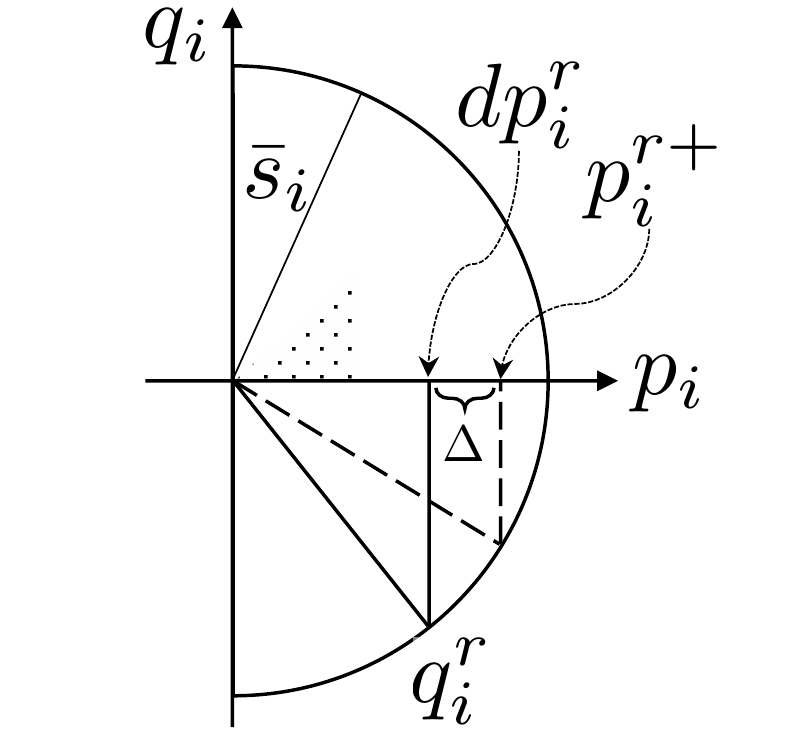}
	\caption{Operation orthant of the smart inverter }
	\label{fig:curtail} 
\end{figure}
As for the small active power cases, the corresponding reactive power, which is less than the upper bound, is enough to stabilize the nodal voltage.
As the active power reaches the rated value or some particular proportionality, the reactive power reaches $q_i^r$.
Once $p_i^{r+} \geq d p_i^r$, where $d$ is a coefficient, the surplus power $\Delta$ will be curtailed.

\begin{remark}
	Such control logic can be regarded as the principal of reactive power priority, where the capacity of the smart inverter is used for clamping the reactive power to curtail the active power.
\end{remark}

According to the California Public Utilities Commission Resolution E-4920 \footnote{\url{https://www.pge.com/tariffs/tm2/pdf/ELEC_4920-E-A.pdf}}, all the inverters joining the network are required to comply with the reactive power priority principle.
However, to the best of our knowledge, all the proposed distributed control schemes have to curtail a particular amount of active power if the PV power is surplus a lot.
As a result, the principle may cause the loss of the opportunity to provide active power.
From the viewpoint of a PV-based node that intends to provide more renewable energy to the grid may be restrained by the particular mechanism.
Such a curtailment suppresses the potential benefit of the end-point user, which may motivate users to bypass the restraint via multiple misbehaviors.

\subsection{Heuristic attack}

Generally, there are three categories of misbehavior such as node's crash, malicious, and Byzantine modes \cite{leblanc2013resilient}.
In this work, we focus on the malicious misbehaving, where an attacker broadcasts the same malicious message to all of his two-hop neighbors.
We assume that the only attacker, node $j\in\mathcal{N}$, can compromise his own data but cannot modify the data passed from the other normal nodes, which can be realized by an authenticated technique \cite{he2013sats}.

The parameter $\theta_i^a(k)$ under malicious misbehavior is presented as
\begin{align}
\theta_i^a(k) := \theta_i(k) + a,
\end{align}
where $a\in\mathbb{R}$ is the manipulated term keeping identical during each iteration.
Such malicious misbehavior is also known as the constant injection attack\cite{he2013sats} or a heuristic attack.
Then the broadcast with contaminated data will influence the control performance of the other nodes, following the parameter update and share fashion given in \cref{eq:updatei1,eq:updatei2}.

In this work, let us consider only one attacker within the network.
Such a consideration is reasonable since, in the distribution network application, normal end-point energy users may not be able to collude and cooperate to commit misbehavior.
On the other hand, the cooperative attack is out of the scope of this work.

Note that such an attack can also be regarded as a false data injection attack (FDIA) \cite{liu2011false}.
However, the conventional FDIA mainly focuses on how to manipulate the residual within the state estimation procedure.
Although with a similar formation to FDIA, the attack in this work impacts the running of the distributed controller, which is different.
Therefore, the novelty of this work does not stem from the attack's formation but the underlying working mechanism.
To eliminate the ambiguity, we avoid referring to the proposed attack as FDIA.

Let us consider a three-node toy example, as shown in fig.\ref{fig:information}.
The upper subplot represents the $k_{th}$ iteration.
The attacker, node 1, sends tampered data $\theta_1^a(k) $ to its neighbors; then, in the $(k+1)_{th}$ iteration , the lower subplot, due to the contaminated parameter $\theta_1^a(k)$, node 2 and 3 derive $\lambda_2^{a_1}(k+1),~\theta_2^{a_1}(k+1)$, and terms $\lambda_3^{a_1}(k+1),~\theta_3^{a_1}(k+1)$ transmitting to node 1.
As the iterations continue, the malicious behavior of the attacker is propagated throughout the network.

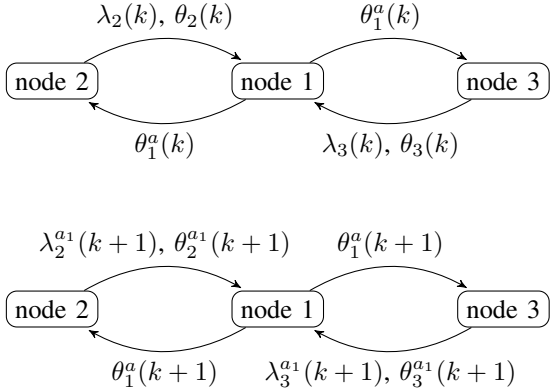
\begin{figure}[tbh!]
	\begin{minipage}[b]{\paperwidth}
\hspace{0.7cm}
		\begin{tikzpicture}[->,>=stealth',shorten >=2pt,node distance=1cm,auto,main node/.style={rectangle,rounded corners,draw,align=center}, scale=1, every node/.style={scale=1}]
			\node[main node] (1) {node 1};
			\node[main node] (2) [xshift=-2cm][left of=1] {node 2};
			\node[main node] (3) [xshift=2cm][right of=1] {node 3};
			\path
			(1) edge [bend left=30] node {{$\theta_1^a(k)$}} (3)
			(3) edge [bend left=30] node {$\lambda_3(k)$, $\theta_3(k)$} (1)
			(1) edge [bend left=30] node {{$\theta_1^a(k)$}} (2)
			(2) edge [bend left=30] node {$\lambda_2(k)$, $\theta_2(k)$} (1);
		\end{tikzpicture}
\end{minipage}\\
\vspace{0.3cm}\\
\begin{minipage}[b]{\paperwidth}
\hspace{0.7cm}
		\begin{tikzpicture}[->,>=stealth',shorten >=2pt,node distance=1cm,auto,main node/.style={rectangle,rounded corners,draw,align=center},scale=1, every node/.style={scale=1}]
			\node[main node] (1) {{node 1}};
			\node[main node] (2) [xshift=-2cm][left of=1] {{node 2}};
			\node[main node] (3) [xshift=2cm][right of=1] {{node 3}};
			\path
			(1) edge [bend left=30] node {{$\theta_1^a(k+1)$}} (3)
			(3) edge [bend left=30] node {{$\lambda_3^{a_1}(k+1)$, $\theta_3^{a_1}(k+1)$}} (1)
			(1) edge [bend left=30] node {{$\theta_1^a(k+1)$}} (2)
			(2) edge [bend left=30] node {{$\lambda_2^{a_1}(k+1)$, $\theta_2^{a_1}(k+1)$}} (1);
		\end{tikzpicture}
	\end{minipage}
	\caption{Iteration $k$ (upper) and $k+1$ (lower)}
	\label{fig:information}
\end{figure}

Once the tampered parameter $\bd{\lambda}^a$ is settled by a proper stop criterion, substituting $\lambda_i^a$ into $\bd{q}$, the smart inverter output of node $i$, i.e., the attacker, is denoted as follows
\begin{align}
\label{eq:reactive}
	{q}_i^a = -\tilde{\bd{B}}_{ri}^2(\bar{\bd{\lambda}}^a - \underline{\bd{\lambda}}^a) -\xi p_i.
\end{align}

This section discusses a distributed inverter control scheme that is vulnerable because the local information exchange mechanism lacks centralized bad data detection.
Motivated by the potential curtailment of PV active power, a heuristic attack can be implemented to influence the controller.

\section{Case study}

\subsection{System configuration}

We demonstrate the analysis on a single-phase eight-node homogeneous radial distribution system as shown in Fig. \ref{fig:linefigure}, where node $N_0$ represents the substation of the network.
Node 3 has two branches, and each branch has two children nodes.
Let us assume links between any two nodes have identical physical characteristics, hence the same $x$ and $r$. 

\begin{figure}[hbt!]
\begin{center}
\begin{tikzpicture}[node distance={15mm}, thin, main/.style = {draw, circle}, scale=0.9, every node/.style={scale=0.9}] 
\node[main] (1) {$N_1$}; 
\node[main] (2) [right of=1] {$N_2$};
\node[main] (3) [right of=2] {$N_3$};
\node[main] (6) [below of=3] {$N_6$};
\node[main] (7) [right of=6] {$N_7$};
\node[main] (4) [right of=3] {$N_4$};
\node[main] (5) [right of=4] {$N_5$};
\node[main] (8) [left  of=1] {$N_0$};

\draw (1) -- (2);
\draw (2) -- (3);
\draw (3) -- (4);
\draw (4) -- (5);
\draw (3) -- (6);
\draw (6) -- (7);
\draw (1) -- (8);
\end{tikzpicture} 
\end{center}
\caption{8-node system demonstration}
\label{fig:linefigure}
\end{figure}
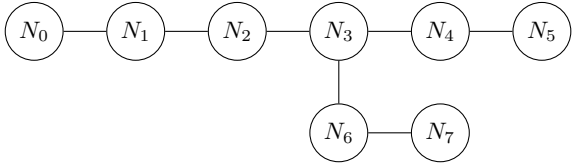

Given the homogeneous network assumption, each line between two nodes is 200 meters of copper wire, and the diameter of the wire is two millimeters.
Then, the reactance and resistance can be derived as $X = 0.4~\Omega$ and $R = 1.1 \Omega$, respectively.
The rated apparent power is given as 5 kVA for each smart inverter.
The active power is given as $p = $[3.6 -4 2.26 -2.5 4.85 3.31 2.43], the unit is kWh.
Besides, the desired stable voltage is 220V.

Let us demonstrate the effect of the heuristic attack where the attacker only needs to tamper with the attacker-side data to provide to his two-hop neighbors.
Given the active power reading $\bd{p}$, the controller uses (\ref{eq:reactive}) under an adversarial environment to derive the optimal reactive power.
In this paper, we assume that the iteration stops once the reactive power fluctuation $|q_i(k) - q_i(k+1)|\leq \tau$ continuing for 1000 iterations, where $\tau$ is a suitable small value.
Note that such a stop criterion may be flexible in the different application scenarios.
Once the criterion is satisfied, the controller controls the actuator to adjust the reactive power output.

\subsection{Numerical results}

\begin{figure}[tbh!]
\centering
	\includegraphics[scale = 0.43]{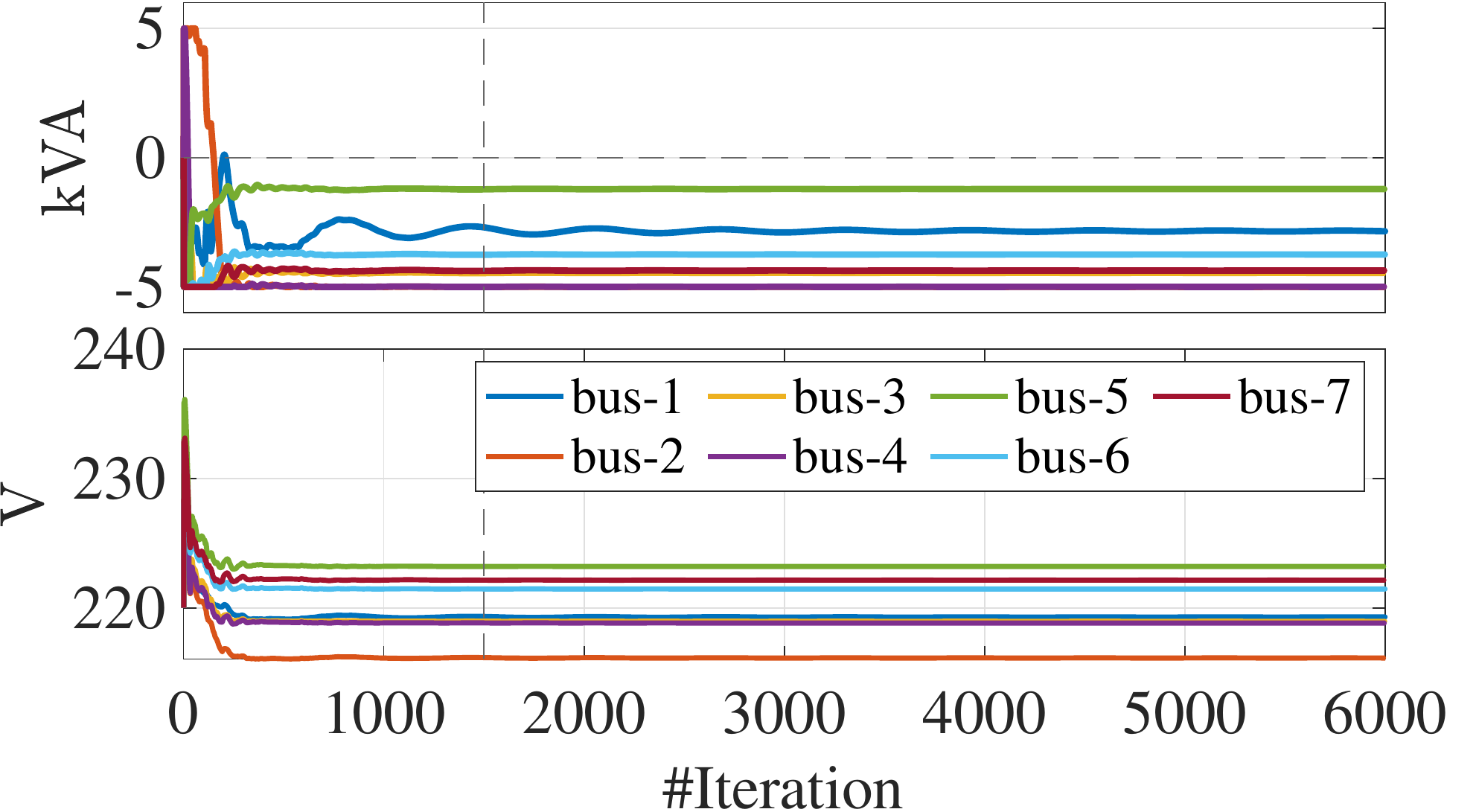}
\caption{Controller iterative convergence without attack. a) Reactive power; b) Voltage.}
\label{fig:bftera} 
\end{figure}
Firstly, let us demonstrate the validity of the distributed controller.
The value of the reactive power converges after 500 iterations, as shown in the upper subplot of Fig. \ref{fig:bftera}.
Likewise, the voltage value also converges and reaches a feasible region, as shown in the lower subplot.
Node 5 is forced to output at least -1 kVA reactive power to the grid to comply with the California Public Utilities Commission Resolution E-4920.
On the contrary, node 3 and node 4 are forced to output -5 kVA reactive power to keep the voltage stable within a feasible region, indicating that such two nodes can not feed PV power to the grid because of the apparent power limitation.
Such a result demonstrates the validity of the derived distributed control scheme.

\begin{figure}[t!]
\centering
	\includegraphics[scale = 0.43]{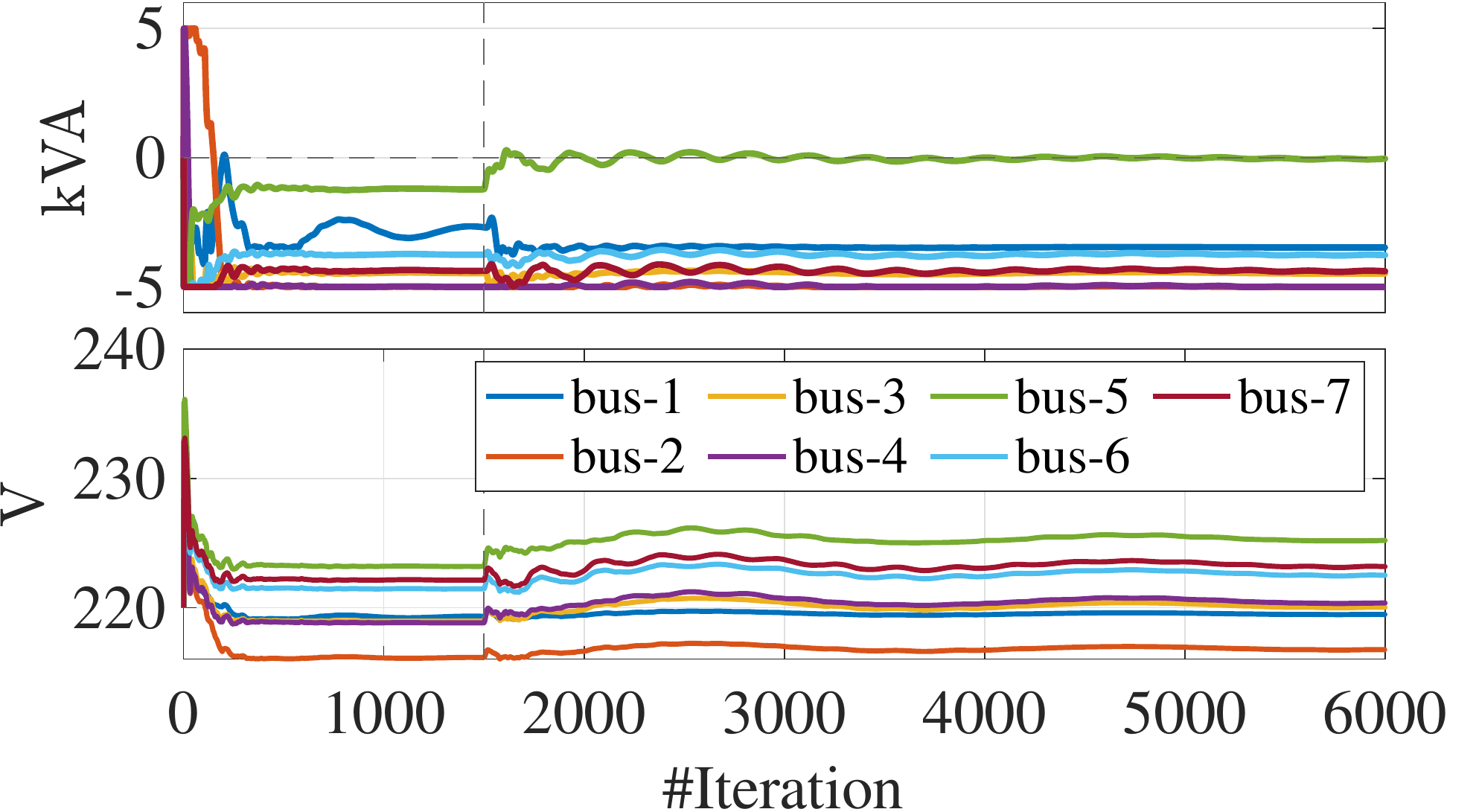}
\caption{Controller iterative convergence with attack. a) Reactive power; b) Voltage.}
\label{fig:aftera} 
\end{figure}
Let us suppose that node 5 has sufficient surplus PV power and a desire to feed as much as PV power to the grid for additional profit.
The node could become the attacker, then compromises the exchanging data and updates the data to its two-hop neighbor nodes 3 and 4, as illustrated in the data update pattern in Fig. \ref{fig:information}.
In the subsequent iterations, the entire network updates the decision variables within a contaminated environment. 
As shown in Fig. \ref{fig:aftera}, the attack happens at the 1500th iteration, where the vertical dash line lies.
The reactive power of the attacker is manipulated to be zero under the heuristic attack, indicating that more active PV power can be fed back to the grid by getting rid of the limitation of the reactive power priority. 
Simultaneously, the voltage fluctuations at all nodes lie within a feasible small region.
Therefore, such an adversarial behavior may help the attacker bypass the responsibility for providing or absorbing reactive power.

\section{Conclusion}
This paper considers the vulnerability of a decentralized inverter controller in a distribution network. 
We discuss the effect of the reactive power priority principle on the local two-hop information update and propose a heuristic attack.
The result demonstrates that the attacker can bypass the principle and provides more energy to the grid pursuing more profit.

In future work, the heuristic attack could be expanded to be a more complicated optimization method, and it would be desirable to develop a robust counterattack mechanism once the attack is detected.

\bibliographystyle{IEEEtran}
\bibliography{sutd1.bib}

\end{document}